\theoremstyle{plain}
\newtheorem{Thm}{Theorem}
\newtheorem{Lem}{Lemma}
\theoremstyle{definition}
\newtheorem{Def}{Definition}
\newtheorem{Mech}{Mechanism}
\newtheorem{Exmp}{Example}
\title{Strategyproof Mechanism for Two Heterogeneous Facilities with Constant Approximation Ratio}
\author{
	Minming Li \thanks{City University of Hong Kong, Hong Kong, China. minming.li@cityu.edu.hk}
	\and
	Pinyan Lu \thanks{ITCS, Shanghai University of Finance and Economics, Shanghai, China. lu.pinyan@mail.shufe.edu.cn}
	\and
	Yuhao Yao \thanks{University of Chinese Academy of Sciences, Beijing, China. yaoyuhao15@mails.ucas.ac.cn}
	\and
	Jialin Zhang \thanks{Institute of Computing Technology, Chinese Academy of Sciences, Beijing, China. University of Chinese Academy of Sciences, Beijing, China. zhangjialin@ict.ac.cn}
}
\date{}
\begin{document}

\newpagestyle{fancy}{
    \sethead{}{}{}
    \setfoot{}{\thepage}{}
}
\pagestyle{fancy}

\maketitle

\setlength{\baselineskip}{13pt}
\setlength{\parskip}{4pt}

\begin{abstract}
In this paper, we study the two-facility location game on a line with optional preference where the acceptable set of facilities for each agent could be different and an agent's cost is his
 distance to the closest facility within his acceptable set. The objective is to minimize the total cost of all agents while
  achieving strategyproofness. We design a deterministic strategyproof mechanism for the problem with approximation ratio of $2.75$, 
  improving upon the earlier best ratio of $n/2+1$ \cite{yuan2016ecai}. 
\end{abstract} 


\section{Introduction}

Facility location games have been studied extensively in recent years. In the general setting, an authority adopts a mechanism to decide where to place the  facilities in the existence of participating agents with their reported locations or preferences. The authority aims to minimize the social cost on the condition that agents will not gain by misreporting their information. This property is called strategyproofness.
The cost of an agent is defined as his distance to the facility. In the most classic setting for single facility on a line and private location information, the median mechanism is strategyproof and minimizes the social cost \cite{moulin1980strategy}. Procaccia and Tennenholtz introduced the setting for two-facility location games and initiated the approximate mechanism design for facility location games \cite{procaccia2009}. In this two-facility model, an agent's cost is his distance to the closer facility.  Unlike the single facility setting, the optimality no longer coincides with strategyproofness in this model. They proved constant lower bound for the approximation ratio and provide a strategyproof mechanism with linear $O(n)$ approximation ratio. This left a huge gap for the tight ratio. Lu et al. obtained a randomized mechanism which achieves a constant approximation ratio of $4$ and proved a linear lower bound $\Omega(n)$ for all deterministic strategyproof mechanisms \cite{LuSWZ10}.

In their model, two facilities are identical and an agent is interested in the one closer to him. For two heterogeneous facilities, some agent may only accept one of them and the cost is his distance to that particular one. Of course, if every agent is only interested in one of them, it becomes two separate single facility games and not interesting. The interesting setting is that some agents are interested in both facilities while others are interested in only one of them, and these preferences are agents' private information. In general, an agent's cost is his distance to the closest facility in his acceptable set. This heterogeneous facilities model was proposed in \cite{yuan2016ecai}. To study the effect of private preference, they assumed that the location information is public in this new model. This is also a more realistic model since in many real facility location applications, agents' locations can be publicly verified.

In their paper, they proved a lower bound of $2$ and provided a strategyproof mechanism with linear $O(n)$ approximation ratio. This again left a huge gap for the tight ratio. It is also not clear whether there is a similar large gap between randomized and deterministic mechanisms as the homogeneous two-facility setting.

Besides its application relevance, the facility location game is also a very important playground for the theory of strategyproof mechanisms without money. Due to the remarkable Arrow's impossibility theorem \cite{arrow1950difficulty} and Gibbard-Satterthwaite's impossibility theorem \cite{gibbard1973jes}\cite{satterthwaite1975strategy}, the space of strategyproof mechanisms is very limited. Gibbard-Satterthwaite's impossibility theorem states that any strategyproof social choice function onto a set of alternatives with at least three options is a dictatorship in the general preference domain. Basically, there are only two types of strategyproof mechanisms: dictatorship or majority voting of two options. The median mechanism is neither of them. The existence of such non trivial strategyproof mechanisms is due to single-peaked preference structure of the problem. Due to this, single-peaked preference is an important direction in the study of strategyproof mechanisms without money. Studying other variants of facility location games may lead to the discovery of other interesting preference structures and non trivial strategyproof mechanisms. In most extended version of facility location games, the proposed strategyproof mechanisms are some versions (maybe randomized) of dictatorship, a majority voting of two boundary options or things like that.
The proportional mechanism from Lu et al. \cite{LuSWZ10} is one notably nontrivial example. But this one only satisfies a weaker version of strategyproofness, namely strategyproofness in expectation. This might be the reason why there is a huge gap between randomized and deterministic mechanisms. It would be more interesting if there are other nontrivial deterministic mechanisms.


\subsection{Our Contribution}
In this paper, we design a novel strategyproof mechanism which achieves an approximation ratio of $2.75$. This asymptotically resolves the problem of the two heterogeneous facilities game. Our mechanism is very simple. We first ignore agents' preferences and assume that both facilities are acceptable for each agent. We then, purely based on the public location information, compute the optimal locations of two facilities, which we call $s_{\ell}$ and $s_r$. On a line, the optimal location can be computed in polynomial time. Then among four possible locations of the two facilities $(s_{\ell}, s_{\ell})$, $(s_{\ell}, s_r)$, $(s_r, s_{\ell})$, $(s_r, s_r)$, we choose the one with the minimum social cost based on agents' reported preferences.

Our mechanism is deterministic and very simple but at the same time highly non-trivial in the sense that its strategyproofness requires an interesting argument.
It does not belong to the two exception families of Gibbard-Satterthwaite's impossibility theorem.
The reason for the median mechanism being strategyproof is the single-peaked preference. It is an interesting open direction to figure out what kind of preference structure makes our mechanism strategyproof. This may lead to a generalized version of our mechanism, which works for a family of problems.

We also studied a natural extension of our mechanism for the facility location game with three or more facilities. We provide an example to show that the mechanism is no longer strategyproof. This is similar to what happens in the proportional mechanism \cite{LuSWZ10}, whose three-facility version is no longer strategyproof. It is an interesting open question to extend our result to the facility location game with three or more facilities. We note that the same problem for homogeneous facilities remains open after almost a decade.



\subsection{Related works}

There are two major perspectives to approach the facility location game, namely, characterizing strategyproof mechanisms and designing strategyproof mechanisms. The classic agent preference for facility location games on a line is a special case of single-peaked preference for which Moulin \cite{moulin1980strategy} characterized all the anonymous, strategyproof and efficient mechanisms and at the same time showed that median mechanism is strategyproof for minimizing social cost. This is the pioneering work in the field. Later on, more characterizations are studied on networks \cite{schummer2002strategy}, for two facilities \cite{fotakis2014power} and for double-peaked preference \cite{filosratsikas2017}. There are also characterizations for other variants of facility location games.

When optimality no longer coincides with strategyproofness, Procaccia and Tennenholtz initiated the approximate mechanism design for facility location games \cite{procaccia2009}. Since then, facility location games are enriched from many different perspectives.

In terms of preference of agents for a single facility, Cheng et al. \cite{cheng2011cocoa} initiated obnoxious facility location games where agents want to be far away from the facility, followed by \cite{ibara2012cocoa}\cite{oomine2016itis}\cite{mei2018joco}. \cite{zou2015aamas} and \cite{feigenbaum2015aaai} studied dual preference where agents have different preferences towards the single facility. Filos-Ratsikas et al. \cite{filosratsikas2017} studied double-peaked preference.

In terms of the agents' cost function with respect to the distance to the facility, besides linear functions, there are also other functions studied. For example, threshold based linear function \cite{lilithreshold}, agent dependent linear function (happiness) \cite{mei2016aamas} and concave function \cite{concave}.

In terms of social objective, besides the mostly studied social cost/utility and maximum cost/utility, there are also other objectives studied like sum of square of distances\cite{feldman2013ec}\cite{mei2018joco}, difference of maximum distance and minimum distance (envy) \cite{cai2016ijcai}.

The studies on heterogeneous facilities was initiated by Serafino and Ventre\cite{serafino2014ecai}\cite{serafino2015aaai}, where the cost of an agent who likes both facilities is the sum of the distances to both facilities. When the cost is only affected by the closer facility or the farther facility, it is called optional preference and is studied by \cite{yuan2016ecai}. Recently fractional preference is studied by \cite{fong2018aaai}, where the cost is a weighted sum of the two distances. There are also works setting a distance constraint between the two facilities, either maximum distance \cite{zou2015aamas}\cite{chen2018mechanism} or minimum distance \cite{AAMAS2019c}.

For other metric spaces, Alon et al. \cite{alon2010mor} and Cheng et al. \cite{cheng2013tcs} studied networks. Lu et al. \cite{LuSWZ10} studied Euclidean space where a 4-approximated randomized strategyproof mechanism is proposed.

There are also recent works on capacitated facility location games \cite{aziz2018capacity}, dynamic facility location games \cite{IJCAI2018}\cite{wada2018facility}, facility location games with externality \cite{AAMAS2019a} and facility location games with dual roles \cite{AAMAS2019b} where agents also share the role of facilities and payment is introduced to deal with a non-standard utility model.

More works could be found in \cite{nehama2019manipulations}\cite{anastasiadis2018aamas}\cite{dokow2012mechanism}\cite{lu2009wine}\cite{sonoda2016aaai}\cite{todo2011aamas}\cite{zhang2014joco}\cite{tamir1991obnoxious}\cite{thang2010group}\cite{cheng2013obnoxious}.

\section{Preliminaries}


There are $n$ agents on a line and the government is to build two facilities (named as $F_1$ and $F_2$) for agents. Agent $i$ is located at $x_i$ and has a preference $p_i \in \{\{F_1\}, \{F_2\}, \{F_1,F_2\}\}$. (Without loss of generality, we assume $x_1 \le x_2 \le ... \le x_n$.) Here, we call \textbf{x} = $\{x_1,x_2,...,x_n\}$ as location profile and \textbf{p} = $\{p_1,p_2,...,p_n\}$ as preference profile.


In this model, we define the distance between two points $s$ and $t$ as $dist(s,t)=|s-t|$.

Given that $F_1$ is built at $y_1$ and $F_2$ is built at $y_2$, we define cost of the agent $i$ as $cost_{\textbf{x},\textbf{p},i}(y_1,y_2) = \min_{k: F_k \in p_i} dist(x_i,y_k)$. Then, the social cost is defined as $COST_{\textbf{x},\textbf{p}}(y_1,y_2)$ = $\sum_{i=1}^{n} cost_{\textbf{x},\textbf{p},i}(y_1,y_2)$.

In this model, the government uses a \textit{mechanism} to decide the places to build the facilities. More specifically, a mechanism takes a location profile \textbf{x} and a preference profile \textbf{p} as input and outputs facilities' locations.

Given that an agent might misreport his preference, which will lead to a different output by the mechanism and might lower his own cost, the government needs to design a \textit{strategyproof} mechanism to avoid this situation.

\begin{Def}
	(strategyproof)
	
	A mechanism $M$ is strategyproof if $M$ has the following property:

	For any location and preference profile \textbf{x} and \textbf{p}, any agent $i$ does not have incentive to misreport his preference. That is to say if the agent $i$ cheats by misreporting his preference as $p'_i$ and the preference profile changes to \textbf{p'} = $\{p_1 ,p_2, ..., p_{i-1}, p'_i, p_{i+1}, ..., p_n\}$, then there must be $cost_{\textbf{x},\textbf{p},i}(M(\textbf{x},\textbf{p})) \le cost_{\textbf{x},\textbf{p},i}(M(\textbf{x},\textbf{p'}))$.
\end{Def}

Given that no strategyproof mechanism achieves the minimum social cost \cite{yuan2016ecai}, we use \textit{approximation ratio} to evaluate the performance of mechanisms.

\begin{Def}
	(approximation ratio)
	
	A mechanism $M$ is $\alpha$-approximated if for any location and preference profile \textbf{x} and \textbf{p}, $COST_{\textbf{x},\textbf{p}}(M(\textbf{x},\textbf{p})) \le \alpha \cdot \min_{y_1,y_2} COST_{\textbf{x},\textbf{p}}(y_1,y_2)$.
\end{Def} 
\section{Our Strategyproof Mechanism}
In this section, we introduce our mechanism and prove its strategyproofness. In Section 4, we will analyze its approximation ratio.

\subsection{The mechanism}

\begin{Mech}
	Define preference profile \textbf{q} = $\{\{F_1, F_2\}, \{F_1, F_2\}, ..., \{F_1, F_2\}\}$, i.e. every agent prefers both facilities in the preference profile \textbf{q}. Let location pair $(s_{\ell}, s_r)$ satisfy $(s_{\ell}, s_r) = \mathop{\arg\min} _{y_1, y_2 \in \{x_1, x_2, ..., x_n\}} COST_{\textbf{x},\textbf{q}}(y_1, y_2)$. If $(y_1, y_2)$ which minimizes $COST_{\textbf{x},\textbf{q}}(y_1, y_2)$ is not unique, we choose the pair with minimum $y_1$ first and then minimum $y_2$. Finally, we output the pair $(f_1, f_2) = \mathop{\arg\min} _{y_1, y_2 \in \{s_{\ell}, s_r\}} COST_{\textbf{x},\textbf{p}}(y_1, y_2)$. If there are multiple solutions, we choose our output in this order: $(s_{\ell}, s_{\ell})$, $(s_{\ell}, s_r)$, $(s_r, s_{\ell})$, $(s_r, s_r)$.
\end{Mech}

In the rest of this paper, $s_{\ell}$ and $s_r$ are specified to be the $s_{\ell}$ and $s_r$ calculated in Mechanism 1.

\subsection{Strategyproofness}

\begin{Thm}
	Mechanism 1 is strategyproof.
\end{Thm}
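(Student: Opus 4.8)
The plan is to exploit the fact that $s_{\ell}$ and $s_r$ are computed purely from the public location profile $\textbf{x}$ and therefore do not depend on any reported preference. Hence a deviation by agent $i$ can only alter which of the four candidate placements $(s_{\ell},s_{\ell})$, $(s_{\ell},s_r)$, $(s_r,s_{\ell})$, $(s_r,s_r)$ is selected in the final step. Writing $a = dist(x_i,s_{\ell})$ and $b = dist(x_i,s_r)$, I would first tabulate agent $i$'s \emph{true} cost at these four candidates for each of his three possible true preferences. The decisive structural feature is that this true cost takes only two distinct values, so each true preference partitions the four candidates into two indifference classes, a ``good'' class and a ``bad'' class; a misreport can help \emph{only} if it moves the output out of the bad class and into the good class.

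To compare the truthful output $o$ with the post-deviation output $o'$, I would let $g_k$ denote the aggregate true cost of the other $n-1$ agents at candidate $k$, which is independent of $i$'s report, so that the mechanism always selects $\arg\min_k (g_k + r_k)$ under the fixed lexicographic tie-break in the order $(s_{\ell},s_{\ell})$, $(s_{\ell},s_r)$, $(s_r,s_{\ell})$, $(s_r,s_r)$, where $r_k$ is $i$'s reported cost at candidate $k$. Optimality of the truthful choice then yields a margin inequality comparing $\min_k g_k$ over the good class to $\min_k g_k$ over the bad class, with slack governed by $|a-b|$. This margin inequality is the main quantitative lever.

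For the true preference $\{F_1,F_2\}$ the argument is clean: every misreport is a strict subset, which weakly \emph{raises} the reported cost at the two mixed candidates $(s_{\ell},s_r),(s_r,s_{\ell})$ and leaves the two pure candidates $(s_{\ell},s_{\ell}),(s_r,s_r)$ unchanged. I would show the output can therefore only shift toward a pure candidate, whereas the agent's optimal true cost $\min(a,b)$ is already attained at a mixed candidate (and at the nearer pure candidate), so no subset report can reduce his cost. In particular, if the truthful output is a pure candidate its reported cost is unchanged by the deviation while every competitor's weakly increases, so that pure candidate remains the strict minimizer and the output is unchanged.

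The remaining cases, true preference $\{F_1\}$ or $\{F_2\}$, are where the real obstacle lies, and I expect them to require genuine case analysis. Here the dangerous deviations are \emph{widening} to $\{F_1,F_2\}$ or \emph{switching} to the other singleton, since these can \emph{lower} the reported cost at some candidates, which is exactly the direction that could steer the minimizer into the agent's good class. I would dispose of them by splitting on the sign of $a-b$ and substituting the truthful margin inequality into the post-deviation comparison, checking candidate by candidate that the pre-deviation class still attains the minimum after the deviation. The subtle point, and the crux of the whole proof, is that several of these comparisons become exact ties of the form $g_i + a = g_j + b$, and one must verify that the fixed lexicographic tie-break always resolves them \emph{against} the manipulator, keeping the output inside its original indifference class. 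The main work is thus confirming that, across both sign cases and for both the widening and the switching deviations, the slack $|a-b|$ together with the tie-breaking order is always sufficient to block the escape.
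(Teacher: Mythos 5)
Your proposal is correct and takes essentially the same route as the paper's proof: the paper's Tables 1 and 2 are precisely your per-candidate tabulation of reported-cost changes (your $g_k + r_k$ decomposition), and its conclusion that $F_1$ never moves to the agent's nearer point under any misreport is exactly your ``the output cannot escape the bad class'' argument. The only difference is emphasis --- the tie-breaking issue you call the crux is handled implicitly in the paper by weak inequalities combined with the fixed candidate order $(s_{\ell}, s_{\ell})$, $(s_{\ell}, s_r)$, $(s_r, s_{\ell})$, $(s_r, s_r)$ under its WLOG that $s_r$ is the closer point (in the mirror case the truthful-output inequalities become strict, so ties never arise in a harmful way).
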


\begin{proof}
It is easy to see that $s_{\ell}$ and $s_r$ do not rely on agents' preferences and the two facilities' locations will always be one among $(s_{\ell}, s_{\ell})$, $(s_{\ell}, s_r)$, $(s_r, s_{\ell})$ and $(s_r, s_r)$.

We first prove that the agent whose preference is $\{F_1, F_2\}$ will not cheat. Suppose $p_i = \{F_1, F_2\}$. Without loss of generality, we assume that $s_r$ is closer to agent $i$ than $s_{\ell}$. Therefore, the only case that agent $i$ has incentive to cheat is that Mechanism 1's output is $(s_{\ell}, s_{\ell})$. Let \textbf{p} be the preference profile where no agent cheats and \textbf{p'} be the preference profile where agent $i$ misreports his preference as $\{F_1\}$. Table 1 shows the difference of social cost for \textbf{p'} and \textbf{p}.

\begin{table}[h]
\begin{center}
\begin{tabular}{|c|c|}
\hline
$(y_1, y_2)$ & $COST_{\textbf{x},\textbf{p'}}(y_1,y_2) - COST_{\textbf{x},\textbf{p}}(y_1,y_2)$ \\
\hline
$(s_{\ell}, s_{\ell})$ & 0 \\
\hline
$(s_{\ell}, s_r)$ & $dist(x_i, s_{\ell}) - dist(x_i, s_r)\geq 0$ \\
\hline
$(s_r, s_{\ell})$ & 0 \\
\hline
$(s_r, s_r)$ & 0 \\
\hline
\end{tabular}
\caption{the difference of social cost for \textbf{p'} and \textbf{p}}
\end{center}
\end{table}

Since Mechanism 1's output is $(s_{\ell}, s_{\ell})$ for preference profile \textbf{p}, $COST_{\textbf{x},\textbf{p}}(s_{\ell}, s_{\ell})$ is the minimum among those four social costs for \textbf{p}. By Table 1, it is easy to see that $COST_{\textbf{x},\textbf{p'}}(s_{\ell}, s_{\ell})$ is the minimum among those four social costs for \textbf{p'}. This shows that Mechanism 1's output will still be $(s_{\ell}, s_{\ell})$ even if agent $i$ misreports his preference as $\{F_1\}$. Therefore, agent $i$ has no incentive to misreport his preference as $\{F_1\}$. Using the same analysis, we can prove that agent $i$ has no incentive to misreport his preference as $\{F_2\}$. Hence, agent $i$ has no incentive to cheat.

We then prove that the agent whose preference is $\{F_1\}$ or $\{F_2\}$ will not cheat. Suppose that $p_i = \{F_1\}$. (The proof for the agent with preference $\{F_2\}$ is the same.) Without loss of generality, we assume that $s_r$ is closer to agent $i$ than $s_{\ell}$. Hence, agent $i$ might cheat only when $F_1$ is built at $s_{\ell}$. At this time, $F_2$ may be built at $s_{\ell}$ or $s_r$. Let \textbf{p} be the preference profile where no agent cheats, \textbf{p'} be the preference profile where agent $i$ misreports his preference as $\{F_2\}$ and \textbf{p''} be the preference profile where agent $i$ misreports his preference as $\{F_1, F_2\}$. Table 2 shows the difference of social cost for \textbf{p'} and \textbf{p} and for \textbf{p''} and \textbf{p}.

\begin{table}[h]
\begin{center}
\begin{tabular}{|c|c|c|}
\hline
$(y_1, y_2)$ & $COST_{\textbf{x},\textbf{p'}}(y_1,y_2) - COST_{\textbf{x},\textbf{p}}(y_1,y_2)$ & $COST_{\textbf{x},\textbf{p''}}(y_1,y_2) - COST_{\textbf{x},\textbf{p}}(y_1,y_2)$\\
\hline
$(s_{\ell}, s_{\ell})$ & 0 & 0\\
\hline
$(s_{\ell}, s_r)$ & $dist(x_i, s_r) - dist(x_i, s_{\ell})\leq 0$ & $dist(x_i, s_r) - dist(x_i, s_{\ell})\leq 0$\\
\hline
$(s_r, s_{\ell})$ & $dist(x_i, s_{\ell}) - dist(x_i, s_r)\geq 0$ & 0 \\
\hline
$(s_r, s_r)$ & 0 & 0 \\
\hline
\end{tabular}
\caption{the difference of social cost for \textbf{p'} and \textbf{p} and for \textbf{p''} and \textbf{p}}
\end{center}
\end{table}

If $F_2$ is built at $s_{\ell}$ for \textbf{p}, then $COST_{\textbf{x},\textbf{p}}(s_{\ell}, s_{\ell}) \le \min \{COST_{\textbf{x},\textbf{p}}(s_r, s_{\ell}), COST_{\textbf{x},\textbf{p}}(s_r, s_r)\}$. By Table 2, it is easy to see that $COST_{\textbf{x},\textbf{p'}}(s_{\ell}, s_{\ell}) \le \min \{COST_{\textbf{x},\textbf{p'}}(s_r, s_{\ell}), COST_{\textbf{x},\textbf{p'}}(s_r, s_r)\}$ and $COST_{\textbf{x},\textbf{p''}}(s_{\ell}, s_{\ell}) \le \min \{COST_{\textbf{x},\textbf{p''}}(s_r, s_{\ell}), COST_{\textbf{x},\textbf{p''}}(s_r, s_r)\}$. This means $F_1$ will not be built at $s_r$ even if agent $i$ misreports his preference as $\{F_2\}$ or $\{F_1, F_2\}$.

If $F_2$ is built at $s_r$ for \textbf{p}, then $COST_{\textbf{x},\textbf{p}}(s_{\ell}, s_r) \le \min \{COST_{\textbf{x},\textbf{p}}(s_r, s_{\ell}), COST_{\textbf{x},\textbf{p}}(s_r, s_r)\}$. By Table 2, it is easy to see that $COST_{\textbf{x},\textbf{p'}}(s_{\ell}, s_r) \le \min \{COST_{\textbf{x},\textbf{p'}}(s_r, s_{\ell}), COST_{\textbf{x},\textbf{p'}}(s_r, s_r)\}$ and $COST_{\textbf{x},\textbf{p''}}(s_{\ell}, s_r) \le \min \{COST_{\textbf{x},\textbf{p''}}(s_r, s_{\ell}), COST_{\textbf{x},\textbf{p''}}(s_r, s_r)\}$. Similarly, this means that agent $i$'s misreporting will not make $F_1$ closer. Therefore agent $i$ has no incentive to cheat.

\end{proof}

\section{Analysis on the approximation ratio of Mechanism 1}

In this section, we analyze the approximation ratio of Mechanism 1.

We first introduce some lemmas in Section 4.1. In Section 4.2, we show that Mechanism 1 is 3-approximated. After that, we show how to improve the approximation ratio from 3 to 2.75 in Section 4.3. Finally, in Section 4.4, we show a lower bound of the approximation ratio, which we conjecture to be the exact approximation ratio of Mechanism 1.

\subsection{Preparation}

\begin{Lem}
When analyzing the approximation ratio of Mechanism 1 (denoted as $M$), we can assume $p_i \ne \{F_1,F_2\}$ for every $i \in \{1, 2, ..., n\}$.
\end{Lem}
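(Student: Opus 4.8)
The plan is to read the approximation ratio as a supremum over all profiles $(\textbf{x},\textbf{p})$ of the quantity $COST_{\textbf{x},\textbf{p}}(M(\textbf{x},\textbf{p})) / OPT_{\textbf{x},\textbf{p}}$, where I abbreviate $OPT_{\textbf{x},\textbf{p}} = \min_{y_1,y_2} COST_{\textbf{x},\textbf{p}}(y_1,y_2)$. Since the profiles with no $\{F_1,F_2\}$ agent form a subset of all profiles, one direction of the claim is immediate, and the content of the lemma is that deleting the $\{F_1,F_2\}$ agents can only enlarge the ratio. I would prove this by a reduction that removes one $\{F_1,F_2\}$ agent at a time: starting from any profile with an agent $i$ satisfying $p_i=\{F_1,F_2\}$, I replace $p_i$ by a carefully chosen singleton so that the number of $\{F_1,F_2\}$ agents strictly drops while the ratio does not, and then iterate until none remain.

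First I would fix the singleton. Let $(y_1^*,y_2^*)$ be an optimal pair for $\textbf{p}$, so $COST_{\textbf{x},\textbf{p}}(y_1^*,y_2^*)=OPT_{\textbf{x},\textbf{p}}$, and let $F_j$ be a facility serving agent $i$ there, i.e. $dist(x_i,y_j^*)=\min(dist(x_i,y_1^*),dist(x_i,y_2^*))$. I set $p_i'=\{F_j\}$ and call the new profile $\textbf{p}'$. The elementary fact behind everything is that shrinking an acceptable set from $\{F_1,F_2\}$ to a singleton can only increase that agent's cost, so $COST_{\textbf{x},\textbf{p}'}(y_1,y_2)\ge COST_{\textbf{x},\textbf{p}}(y_1,y_2)$ for every pair, with equality at any pair where $F_j$ serves agent $i$.

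Next I would check that the optimum is preserved. The pointwise inequality gives $OPT_{\textbf{x},\textbf{p}'}\ge OPT_{\textbf{x},\textbf{p}}$, while equality at $(y_1^*,y_2^*)$ gives $OPT_{\textbf{x},\textbf{p}'}\le COST_{\textbf{x},\textbf{p}'}(y_1^*,y_2^*)=OPT_{\textbf{x},\textbf{p}}$, hence $OPT_{\textbf{x},\textbf{p}'}=OPT_{\textbf{x},\textbf{p}}$. This is exactly why I pick the facility serving $i$ \emph{at the optimum} rather than at the mechanism's output: it pins the denominator in place.

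The key step, and the one I expect to need the most care, is that the mechanism's cost does not drop. Here I would use that $s_\ell,s_r$ are computed from $\textbf{x}$ alone (Mechanism 1), so the four candidate pairs are the same for $\textbf{p}$ and $\textbf{p}'$, and that Mechanism 1 outputs a candidate of minimum social cost. Writing $(f_1',f_2')=M(\textbf{x},\textbf{p}')$, which is one of those four candidates,
\begin{align*}
COST_{\textbf{x},\textbf{p}'}(M(\textbf{x},\textbf{p}')) = COST_{\textbf{x},\textbf{p}'}(f_1',f_2') \ge COST_{\textbf{x},\textbf{p}}(f_1',f_2') \ge COST_{\textbf{x},\textbf{p}}(M(\textbf{x},\textbf{p})),
\end{align*}
where the first inequality is the pointwise cost comparison and the second holds because $M(\textbf{x},\textbf{p})$ minimizes social cost over the four candidates under $\textbf{p}$. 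Combined with $OPT_{\textbf{x},\textbf{p}'}=OPT_{\textbf{x},\textbf{p}}$, this shows the ratio weakly increases at each replacement, and since each replacement removes one $\{F_1,F_2\}$ agent, finitely many steps yield a profile with no $\{F_1,F_2\}$ agent and ratio at least that of the original. A point worth double-checking is the tie-breaking in Mechanism 1's final selection, but the displayed chain deliberately sidesteps it: it only uses that the output is \emph{a} minimizer over the four candidates, never which minimizer is chosen.
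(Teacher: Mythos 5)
Your proposal is correct and is essentially identical to the paper's proof: both replace the $\{F_1,F_2\}$ preference of one agent at a time by the singleton facility serving him at an optimal pair, use the pointwise inequality $COST_{\textbf{x},\textbf{p}'}(y_1,y_2)\ge COST_{\textbf{x},\textbf{p}}(y_1,y_2)$ together with the fact that $s_\ell,s_r$ depend only on $\textbf{x}$ to show the mechanism's cost cannot decrease, and use equality at the optimal pair to show the optimum cannot increase, then iterate. The only cosmetic difference is that you establish $OPT_{\textbf{x},\textbf{p}'}=OPT_{\textbf{x},\textbf{p}}$ where the paper settles for the one needed inequality.
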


\begin{proof}
Suppose we have a location profile \textbf{x} and a preference profile \textbf{p} with $p_i = \{F_1,F_2\}$. Let $(opt_1, opt_2)$ be the locations such that $COST_{\textbf{x},\textbf{p}}(opt_1, opt_2) = \min_{y_1, y_2} COST_{\textbf{x},\textbf{p}}(y_1, y_2)$. Let $p'_i$ be $\{F_1\}$ if $dist(x_i, opt_1) < dist(x_i, opt_2)$. Otherwise, let $p'_i$ be $\{F_2\}$. Let preference profile \textbf{p'} = $\{p_1, p_2, ..., p_{i-1}, p'_i, p_{i+1}, ..., p_n\}$, i.e. we get \textbf{p'} by replacing $p_i$ with $p'_i$ in \textbf{p}.

We first prove $COST_{\textbf{x},\textbf{p}}(M(\textbf{x},\textbf{p})) \le COST_{\textbf{x},\textbf{p'}}(M(\textbf{x},\textbf{p'}))$. Since the location profiles in (\textbf{x},\textbf{p}) and (\textbf{x},\textbf{p'}) are the same, $(s_{\ell}, s_r)$ in $M(\textbf{x},\textbf{p})$ and $M(\textbf{x},\textbf{p'})$ are the same. Note that we have $COST_{\textbf{x},\textbf{p}}(y_1, y_2) \le COST_{\textbf{x},\textbf{p'}}(y_1, y_2)$ for any $(y_1, y_2)$, since $p'_i \subseteq p_i$ and other agents' preferences are the same. Therefore, we have $COST_{\textbf{x},\textbf{p}}(y_1, y_2) \le COST_{\textbf{x},\textbf{p'}}(y_1, y_2)$ for $y_1, y_2 \in \{s_{\ell}, s_r\}$, which implies 
\begin{align*}
COST_{\textbf{x},\textbf{p}}(M(\textbf{x},\textbf{p})) 
&= \min\limits_{y_1,y_2 \in \{s_{\ell},s_r\}} COST_{\textbf{x},\textbf{p}}(y_1,y_2) \\
&\le \min\limits_{y_1,y_2 \in \{s_{\ell},s_r\}} COST_{\textbf{x},\textbf{p'}}(y_1,y_2) \\
&= COST_{\textbf{x},\textbf{p'}}(M(\textbf{x},\textbf{p'})).
\end{align*}

We then prove $\min_{y_1,y_2} COST_{\textbf{x},\textbf{p}}(y_1,y_2) \ge \min_{y_1,y_2} COST_{\textbf{x},\textbf{p'}}(y_1,y_2)$. By the definition of $p'_i$, we have $COST_{\textbf{x},\textbf{p'}}(opt_1, opt_2) = COST_{\textbf{x},\textbf{p}}(opt_1, opt_2)$. Therefore, we have 
\begin{align*}
\min\limits_{y_1,y_2} COST_{\textbf{x},\textbf{p'}}(y_1,y_2) 
&\le COST_{\textbf{x},\textbf{p'}}(opt_1, opt_2) \\
&= COST_{\textbf{x},\textbf{p}}(opt_1, opt_2) \\
&= \min\limits_{y_1, y_2} COST_{\textbf{x},\textbf{p}}(y_1, y_2). 
\end{align*}

Then, $COST_{\textbf{x},\textbf{p'}}(M(\textbf{x},\textbf{p'})) \le \alpha \cdot \min_{y_1,y_2} COST_{\textbf{x},\textbf{p'}}(y_1,y_2)$ will imply $COST_{\textbf{x},\textbf{p}}(M(\textbf{x},\textbf{p})) \le \alpha \cdot \min_{y_1,y_2} COST_{\textbf{x},\textbf{p}}(y_1,y_2)$. By simple induction, if we want to prove that Mechanism 1 is $\alpha$-approximated for some $\alpha$, we only need to consider the situation where no agent's preference is $\{F_1,F_2\}$.

\end{proof}

\begin{Lem}
Given a set of points $S = \{x_1, x_2, ...,x_n\}$ on a line, the median of $S$, denoted as $p$, satisfies $\sum_{i=1}^{n} dist(x_i, p) = \min_{y} \sum_{i=1}^{n} dist(x_i, y)$.
\end{Lem}

The proof of Lemma 2 is straightforward.

\begin{Lem}
Let preference profile {\rm \textbf{q}} = $\{\{F_1, F_2\}, \{F_1, F_2\}, ..., \{F_1, F_2\}\}$. Then, $COST_{{\rm \textbf{x},\textbf{q}}}(s_{\ell}, s_r) \le \min_{y_1, y_2} COST_{{\rm \textbf{x},\textbf{p}}}(y_1, y_2)$ holds for any location profile {\rm \textbf{x}} and preference profile {\rm \textbf{p}}.
\end{Lem}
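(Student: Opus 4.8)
The plan is to split the claimed inequality into two independent pieces: first, that $(s_{\ell}, s_r)$ is in fact a \emph{globally} optimal pair of locations for the all-$\{F_1,F_2\}$ profile \textbf{q} (not merely optimal among the agent positions $\{x_1,\ldots,x_n\}$ over which it was defined); and second, that for every fixed pair of locations, the cost under \textbf{q} is a pointwise lower bound for the cost under any preference profile \textbf{p}. Chaining these gives the result.

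For the second, easier piece, I would argue straight from the definition of cost. Since $p_i \subseteq \{F_1, F_2\}$ for every $i$, the minimization defining $cost_{\textbf{x},\textbf{q},i}(y_1,y_2) = \min_{k: F_k \in \{F_1,F_2\}} dist(x_i, y_k)$ ranges over a superset of the indices used in $cost_{\textbf{x},\textbf{p},i}(y_1,y_2)$, hence $cost_{\textbf{x},\textbf{q},i}(y_1,y_2) \le cost_{\textbf{x},\textbf{p},i}(y_1,y_2)$ for every $i$ and every $(y_1,y_2)$. Summing over $i$ yields $COST_{\textbf{x},\textbf{q}}(y_1,y_2) \le COST_{\textbf{x},\textbf{p}}(y_1,y_2)$ for all $(y_1,y_2)$, and taking the minimum over $(y_1,y_2)$ on both sides gives $\min_{y_1,y_2} COST_{\textbf{x},\textbf{q}}(y_1,y_2) \le \min_{y_1,y_2} COST_{\textbf{x},\textbf{p}}(y_1,y_2)$.

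For the first piece, I need $COST_{\textbf{x},\textbf{q}}(s_{\ell}, s_r) = \min_{y_1,y_2} COST_{\textbf{x},\textbf{q}}(y_1,y_2)$, where the right-hand minimum is over all reals but $(s_{\ell}, s_r)$ was defined only as the minimizer over agent positions. The direction $\ge$ is immediate since the unrestricted minimum ranges over a larger set. For $\le$, I would take any real pair $(y_1^*, y_2^*)$, partition the agents into the set $A_1$ of those weakly closer to $y_1^*$ and the set $A_2$ of the rest, so that $COST_{\textbf{x},\textbf{q}}(y_1^*,y_2^*) = \sum_{i\in A_1} dist(x_i,y_1^*) + \sum_{i\in A_2} dist(x_i,y_2^*)$, and then replace $y_1^*$ by the median $m_1$ of $A_1$ and $y_2^*$ by the median $m_2$ of $A_2$. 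Both $m_1, m_2$ are agent positions, and by Lemma 2 neither group sum increases; moreover $COST_{\textbf{x},\textbf{q}}(m_1,m_2) \le \sum_{i\in A_1} dist(x_i,m_1) + \sum_{i\in A_2} dist(x_i,m_2)$ because the left side charges each agent only the nearer of the two facilities. Combining these, $COST_{\textbf{x},\textbf{q}}(m_1,m_2) \le COST_{\textbf{x},\textbf{q}}(y_1^*,y_2^*)$, so the restricted minimum equals the unrestricted one.

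The main obstacle is this first piece, and specifically the subtlety that after relocating $F_1$ to the median of $A_1$ some agent originally served by $F_1$ may become closer to the (also relocated) $F_2$. This causes no difficulty, however, because $COST_{\textbf{x},\textbf{q}}$ always charges each agent the nearer facility, so the clustering expression I bound against is an over-estimate of $COST_{\textbf{x},\textbf{q}}(m_1,m_2)$ and the inequality only improves. Putting the two pieces together gives $COST_{\textbf{x},\textbf{q}}(s_{\ell},s_r) = \min_{y_1,y_2} COST_{\textbf{x},\textbf{q}}(y_1,y_2) \le \min_{y_1,y_2} COST_{\textbf{x},\textbf{p}}(y_1,y_2)$, as required.
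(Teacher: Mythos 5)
Your proof is correct, and it differs from the paper's in one substantive respect. The paper uses the same two-piece decomposition you do, but it disposes of your first piece --- that $(s_{\ell}, s_r)$, although defined as a minimizer over agent positions only, attains the global minimum of $COST_{\textbf{x},\textbf{q}}$ over all real pairs --- simply by citing \cite{yuan2016ecai}, whereas you prove it from scratch via a clustering-and-median exchange: partition the agents according to which of $(y_1^*, y_2^*)$ serves them, move each facility to the median of its own cluster (an agent position, optimal for that cluster by Lemma 2), and note that re-assigning agents to the nearer relocated facility can only decrease the cost. That argument is sound, including your handling of the re-assignment subtlety; the one loose end is the degenerate case where one cluster is empty, which is patched by placing both facilities at the median of the nonempty cluster. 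Your second piece --- pointwise domination $COST_{\textbf{x},\textbf{q}}(y_1,y_2) \le COST_{\textbf{x},\textbf{p}}(y_1,y_2)$ because under \textbf{q} each agent minimizes over a superset of its acceptable set, then taking minima on both sides --- is mathematically the same step the paper performs by evaluating both costs at the optimizer $(opt_1,opt_2)$ of \textbf{p}. As for what each route buys: the paper's citation keeps the lemma short and leans on the known structure of the homogeneous two-facility optimum, while your version is self-contained and makes explicit exactly why Mechanism 1's restriction of the search to agent positions loses nothing --- a fact the paper's correctness genuinely depends on but never argues in-house.
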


\begin{proof}

By \cite{yuan2016ecai}, we have $COST_{\textbf{x},\textbf{q}}(s_{\ell}, s_r) = \min_{y_1, y_2} COST_{\textbf{x},\textbf{q}}(y_1, y_2)$. Let $(opt_1, opt_2)$ satisfy $COST_{\textbf{x},\textbf{p}}(opt_1, opt_2) = \min_{y_1, y_2} COST_{\textbf{x},\textbf{p}}(y_1, y_2)$. Then we have
\begin{align*}
COST_{{\rm \textbf{x},\textbf{q}}}(s_{\ell}, s_r)
&= \min\limits_{y_1, y_2} COST_{\textbf{x},\textbf{q}}(y_1, y_2) \\
&\le COST_{\textbf{x},\textbf{q}}(opt_1, opt_2) \\
&\le COST_{\textbf{x},\textbf{p}}(opt_1, opt_2) \\
&= \min\limits_{y_1, y_2} COST_{\textbf{x},\textbf{p}}(y_1, y_2). 
\end{align*}

\end{proof}

\subsection{Analysis of the approximation ratio of 3}

We first introduce some notations to simplify the description of later analysis.

Given a location profile \textbf{x} and a preference profile \textbf{p}, by Lemma 1, we can divide all agents into two sets $\{i|p_i = \{F_1\}\}$ and $\{i|p_i = \{F_2\}\}$. Let $opt_1$ be the median of $S_1 = \{x_i|p_i = \{F_1\}\}$ and $opt_2$ be the median of $S_2 = \{x_i|p_i = \{F_2\}\}$. (Note that $S_1$ may have same locations and we do not remove them. So does $S_2$.) By Lemma 2, it is easy to see that $COST_{\textbf{x}, \textbf{p}} (opt_1, opt_2) = \min_{y_1, y_2} COST_{\textbf{x}, \textbf{p}} (y_1, y_2)$.

Recall that $s_{\ell}$ and $s_r$ are two locations calculated in Mechanism 1, where $s_{\ell} \le s_r$. We denote $s_{mid}$ as the midpoint of $s_{\ell}$ and $s_r$.

For point set $S$ and point $q$, we define $d(S,q) = \sum_{p \in S} dist(p,q)$. (This definition differs from the common one where $d(S,q)$ is defined as $\min_{p \in S} dist(p,q)$.) We define some notations as follows.
\begin{itemize}
	\item $COST_k \triangleq \min \{d(S_k, s_{\ell}), d(S_k, s_r)\}$, $k = 1,2$. \\
	$COST \triangleq COST_1 + COST_2$. 
	\item $OPT_k \triangleq d(S_k, opt_k)$, $k = 1,2$.\\
	$OPT \triangleq OPT_1 + OPT_2$. 
	\item $BEST_k \triangleq \sum_{x \in S_k} \min \{dist(x, s_{\ell}), dist(x, s_r)\}$, $k = 1,2$.\\
	$BEST \triangleq BEST_1 + BEST_2$. 
\end{itemize}

Intuitively, $COST$ is equal to $COST_{\textbf{x}, \textbf{p}}(M(\textbf{x}, \textbf{p}))$, i.e. the social cost calculated by Mechanism 1. $OPT$ is equal to $\min_{y_1, y_2} COST_{\textbf{x}, \textbf{p}}(y_1,y_2)$, i.e. the minimum social cost for \textbf{p}. $BEST$ is equal to $COST_{\textbf{x}, \textbf{q}}(s_{\ell},s_r)$, i.e. the minimum social cost for \textbf{q}. (\textbf{q} is defined in Mechanism 1 and Lemma 3.)
With these notations, Lemma 3 implies $BEST \le OPT$.

Now, we introduce our first result on the approximation ratio of Mechanism 1.

\begin{Thm}
	Mechanism 1 is 3-approximated, i.e. $COST \le 3 \cdot OPT$.
\end{Thm}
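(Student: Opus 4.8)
The plan is to reduce the claim to a per-group estimate and then combine it with Lemma~3. By Lemma~1 I may assume every $p_i$ is a singleton, so the agents split into $S_1$ and $S_2$; since an agent in $S_k$ only cares about facility $F_k$, the minimization over the four candidate outputs decouples into two independent single-facility choices, giving $COST = COST_1 + COST_2$, while the true optimum is $OPT = OPT_1 + OPT_2$ with $opt_k$ the median of $S_k$ (Lemma~2). Lemma~3 already supplies $BEST \le OPT$. Hence it suffices to prove the single-group inequality $COST_k \le BEST_k + 2\,OPT_k$ for $k = 1,2$: summing over $k$ and invoking $BEST \le OPT$ then yields $COST = COST_1 + COST_2 \le BEST + 2\,OPT \le 3\,OPT$.

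To prove $COST_k \le BEST_k + 2\,OPT_k$, let $A_k = \{x \in S_k : x \le s_{mid}\}$ and $B_k = \{x \in S_k : x > s_{mid}\}$, so that $BEST_k = d(A_k, s_{\ell}) + d(B_k, s_r)$ is exactly the cost of sending each point of $S_k$ to its nearer facility. The key idea is \emph{not} to bound $COST_k$ by whichever of $d(S_k,s_{\ell}),\,d(S_k,s_r)$ is smaller, but rather by the assignment that sends all of $S_k$ to the facility lying on the same side of $s_{mid}$ as the median $opt_k$; since $COST_k = \min\{d(S_k,s_{\ell}),\,d(S_k,s_r)\}$, it is no larger than the cost of this particular assignment. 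Suppose $opt_k \le s_{mid}$ (the case $opt_k > s_{mid}$ is symmetric). Then $COST_k \le d(S_k, s_{\ell})$, and the excess of this over $BEST_k$ is borne entirely by the minority-side points in $B_k$, namely $\sum_{x \in B_k} (dist(x,s_{\ell}) - dist(x,s_r))$.

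The crux is then the per-point charging inequality $dist(x, s_{\ell}) - dist(x, s_r) \le 2\,dist(x, opt_k)$ for every $x \in B_k$, which holds because $x > s_{mid} \ge opt_k$: if $x \in (s_{mid}, s_r]$ the left side equals $2(x - s_{mid})$ while the right side is $2(x - opt_k) \ge 2(x - s_{mid})$, and if $x > s_r$ the left side is the constant gap $s_r - s_{\ell}$ while the right side is at least $2(s_r - s_{mid}) = s_r - s_{\ell}$. Summing over $B_k$ bounds the excess by $2\sum_{x \in B_k} dist(x, opt_k) \le 2\,OPT_k$, which is the desired single-group inequality. I expect the main obstacle to be precisely this step: the temptation is to bound $COST_k$ by the cheaper single-facility cost, but that forces one to locate $opt_k$ relative to $s_{mid}$, an implication that is delicate in boundary cases; charging instead against the median's own side is what makes the factor $2$ emerge cleanly. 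The remaining verifications---the decoupling $COST = COST_1 + COST_2$ and the inside/outside split in the per-point inequality---are routine.
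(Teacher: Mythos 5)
Your proof is correct, but it takes a genuinely different route from the paper's. Both arguments share the same skeleton---reduce to singleton preferences via Lemma~1, decouple into the two groups $S_1, S_2$, prove a per-group inequality, and finish with Lemma~3 ($BEST \le OPT$)---and both bound $COST_k$ by the cost of sending all of $S_k$ to the facility on the median's side of $s_{mid}$. But the per-group inequalities are transposed: the paper proves $COST_k - OPT_k \le 2\,BEST_k$ (excess over the optimum charged to $BEST$), whereas you prove $COST_k - BEST_k \le 2\,OPT_k$ (excess over $BEST$ charged to $OPT$); after summing and applying $BEST \le OPT$, either version gives $COST \le 3\,OPT$. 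The mechanics also differ: the paper pairs points of $S_k$ on opposite sides of the median $opt_k$ in bracket-matching fashion and computes the pairwise change $d(\{u_j,v_j\},s_\ell) - d(\{u_j,v_j\},opt_k)$, which requires the auxiliary step of duplicating all agents to make $|S_k|$ even; your argument needs no pairing and no parity assumption, just the per-point inequality $dist(x,s_\ell) - dist(x,s_r) \le 2\,dist(x,opt_k)$ for $x$ beyond $s_{mid}$, which you verify in both regimes $x \in (s_{mid}, s_r]$ and $x > s_r$. Your version is therefore more elementary and self-contained. What the paper's version buys instead is reusability: the quantity it controls, $\Delta_k = (COST_k - OPT_k)/2 \le BEST_k$, is exactly what Section~4.3 refines case-by-case to improve the ratio to $2.75$, and your transposed inequality does not feed into that refinement.
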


\begin{proof}
To prove the ratio, we try to prove $COST - OPT \le 2 \cdot BEST$. It is sufficient to show $COST_k - OPT_k \le 2 \cdot BEST_k$, $k = 1, 2$. 

By symmetry, we only show $COST_1 - OPT_1 \le 2 \cdot BEST_1$. We first consider the case that $opt_1$ lies on the left side of $s_{mid}$. (See Figure 1.) By definition, we have:
\begin{align*}
COST_1 - OPT_1 	&= \min \{d(S_1, s_{\ell}), d(S_1,s_r)\} - d(S_1, opt_1)\\
				&\le d(S_1, s_{\ell}) - d(S_1, opt_1).
\end{align*}

Without loss of generality, we assume that the amount of agents is even. Otherwise, we copy all agents, including both locations and preferences, and add them to the initial set of agents. It is easy to see that both $COST$ and $OPT$ double. Hence, it does not affect our analysis on the approximation ratio.

Since $opt_1$ is the median of $S_1$, half of $S_1$ lie on the left side of $opt_1$ (or at $opt_1$) and the rest lie on the right side of $opt_1$ (or at $opt_1$). We match the points between those two parts in the way brackets match, which is shown in Figure 1. Finally, those points form $|S_1|/2$ pairs $(u_j, v_j)$, where $u_j$ is in the first part (left side part) and $v_j$ is in the second part (right side part), $j \in \{1, 2, ..., |S_1|/2\}$.

\begin{figure}[h]
    \centering
    \includegraphics[width=0.80\textwidth]{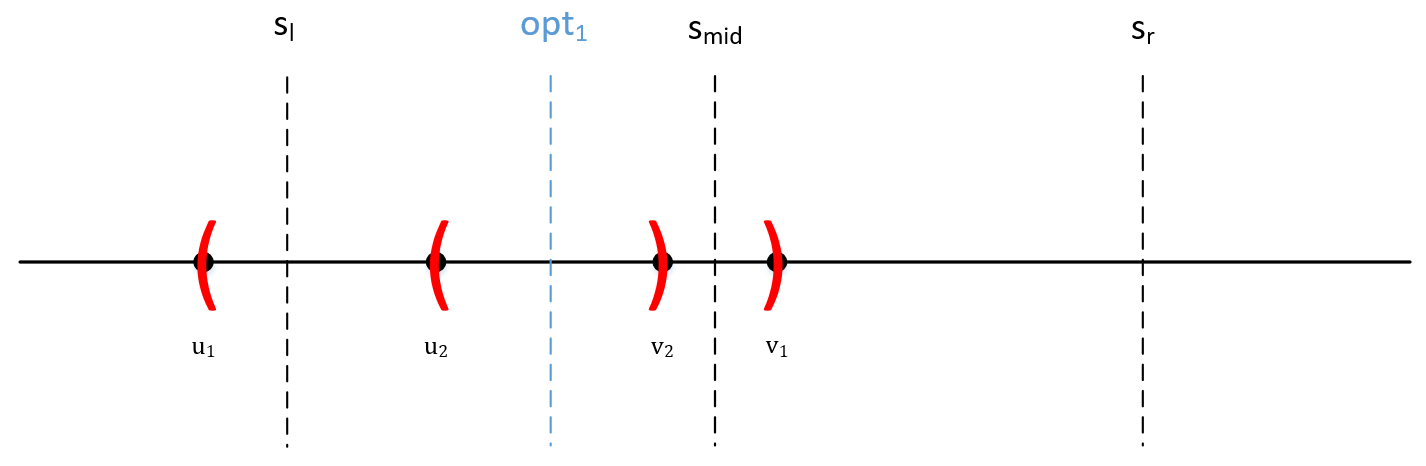}
    \caption{the case $opt_1 \le s_{mid}$}
    \label{fig1}
\end{figure}

We consider the case that $opt_1$ lies on the right side of $s_{\ell}$, then we have the following equation for any $j \in \{1, 2, ..., |S_1|/2\}$.
\begin{equation*}
 d(\{u_j,v_j\},s_{\ell}) - d(\{u_j,v_j\},opt_1) =
\begin{cases}
0, 						& \text{$u_j$ is on the left side of $s_{\ell}$}\\
2 \cdot dist(u_j,s_{\ell}), 	& \text{otherwise}
\end{cases}
\end{equation*}

Given that each $u_j$ lies on the left side of $opt_1$ and $opt_1$ lies on the left side of $s_{mid}$, we find $s_{\ell}$ is closer to $u_j$ than $s_r$. By the definition of $BEST_1$, we have
\begin{align*}
BEST_1 	&= \sum\limits_{x \in S_i} \min \{dist(x, s_{\ell}), dist(x, s_r)\}\\
		&\ge \sum\limits_{j = 1}^{|S_1|/2} \min \{dist(u_j, s_{\ell}), dist(u_j, s_r)\}\\
		&= \sum\limits_{j = 1}^{|S_1|/2} dist(u_j, s_{\ell}).
\end{align*}

Then we have 
\begin{align*}
	COST_1 - OPT_1  &\le d(S_1, s_{\ell}) - d(S_1, opt_1) \\
					&= \sum\limits_{j=1}^{|S_1|/2} (d(\{u_j,v_j\},s_{\ell}) - d(\{u_j,v_j\},opt_1))\\
					&= \sum\limits_{j: \text{$u_j$ is on the right side of $s_{\ell}$}} 2 \cdot dist(u_j, s_{\ell}) \\
					&\le \sum\limits_{j = 1}^{|S_1|/2} 2 \cdot dist(u_j, s_{\ell}) \\
					&\le 2 \cdot BEST_1.
\end{align*}

The analysis for the case that $opt_1$ lies on the left side of $s_{\ell}$ is similar. Hence, $COST_1 - OPT_1 \le 2 \cdot BEST_1$ holds for the case that $opt_1$ lies on the left side of $s_{mid}$. 

The case that $opt_1$ lies on the right side of $s_{mid}$ can be solved by using the same analysis. Hence, we have $COST_1 - OPT_1 \le 2 \cdot BEST_1$.

By Lemma 3, we have $COST - OPT \le 2 \cdot BEST \le 2 \cdot OPT$. Hence, $COST \le 3 \cdot OPT$.

\end{proof}

\subsection{Improvement of approximation ratio from 3 to 2.75}

For simplicity of later discussion, let $\Delta_k = (COST_k - OPT_k)/2$, $k = 1, 2$ and $\Delta = \Delta_1 + \Delta_2$.

In the previous section, we obtain the approximation ratio 3 by showing $\Delta_k \le BEST_k$, $k = 1, 2$. In this section, we will give a tighter upper bound of $\Delta_k$ depending on the position of $opt_k$. Our intuition comes from the observation that when $opt_1$ is on the left side of $s_{\ell}$, we have $\Delta_1 \le 1/2 \cdot BEST_1$ (Case 1 in the proof of Theorem 3). We then carefully analyze the case when $opt_1$ is between $s_{\ell}$ and $s_{mid}$. The main difficulty happens when $opt_1$ is close to $s_{\ell}$, where in this case, $\Delta_1$ can be close to $BEST_1$. Fortunately, we find that in such case, we can bound $\Delta_1$ by $OPT_1$. The following obstacle is that although $BEST \le OPT$, there is no relationship between $OPT_1$ and $BEST_1$. We handle this issue by simultaneously considering $\Delta_1$ and $\Delta_2$.

\begin{Thm}
	Mechanism 1 is 2.75-approximated, i.e. $COST \le 2.75 \cdot OPT$.
\end{Thm}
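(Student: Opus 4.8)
The goal is to prove $COST \le 2.75 \cdot OPT$. Following the structure already established in Theorem 3, the natural strategy is to bound $COST - OPT = 2\Delta = 2(\Delta_1 + \Delta_2)$ and combine this with the relation $BEST \le OPT$ from Lemma 3. The hint in the text before the statement tells us exactly where to push: rather than the uniform bound $\Delta_k \le BEST_k$, we want a sharper, position-dependent bound on each $\Delta_k$ depending on where $opt_k$ sits relative to $s_\ell$, $s_{mid}$, and $s_r$.

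**Establishing the per-set case bounds.**

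The plan is first to prove two refined lemmas on each $\Delta_k$, reusing the bracket-matching argument from Theorem 3. \textbf{(i)} When $opt_k$ lies outside the open interval $(s_\ell, s_r)$ (in particular to the left of $s_\ell$, as observed in the text), the matching argument should give the strong bound $\Delta_k \le \tfrac{1}{2} BEST_k$, since every left-endpoint $u_j$ already sits on the far side of $s_\ell$ and the doubling term vanishes or is controlled. \textbf{(ii)} The delicate case is when $opt_k$ lies strictly between $s_\ell$ and $s_{mid}$ (the symmetric case $s_{mid}$ to $s_r$ follows by reflection). Here $\Delta_k$ can approach $BEST_k$, so the text's idea is to obtain a \emph{second} bound of the form $\Delta_k \le OPT_k$ (or some explicit linear combination of $OPT_k$ and $BEST_k$). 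I would derive this by writing $\Delta_k \le \tfrac{1}{2}(d(S_k,s_\ell) - d(S_k,opt_k))$ and relating the gap to $OPT_k = d(S_k,opt_k)$ via the geometry: points near $opt_k$ contribute little to $OPT_k$ but their displacement to $s_\ell$ is bounded by $dist(opt_k,s_\ell)$, which in turn is controlled because $opt_k$ is close to $s_\ell$.

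**Combining the two sets.**

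The crux, flagged explicitly by the authors, is that there is \emph{no} relation between $OPT_1$ and $BEST_1$ individually, so the $OPT_k$-type bound is useless set-by-set. The resolution is to handle $\Delta_1$ and $\Delta_2$ \emph{jointly}. I would split into cases according to which of the regimes (i)/(ii) each of $opt_1,opt_2$ falls into. If both $opt_k$ are in the favorable regime (i), then $\Delta \le \tfrac{1}{2}BEST \le \tfrac{1}{2}OPT$, giving ratio $2 < 2.75$ with room to spare. The binding case is when at least one $opt_k$, say $opt_1$, is in the dangerous regime (ii) near $s_\ell$. There I would use $2\Delta_1 \le OPT_1 + c\cdot BEST_1$ for the bad set together with $2\Delta_2 \le BEST_2$ for the other, and then invoke $BEST_1 + BEST_2 = BEST \le OPT = OPT_1 + OPT_2$ to trade the unbounded $BEST_1$ term against $OPT$. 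The constant $2.75$ should emerge from optimizing the worst-case split of the total budget $OPT$ between the two sets; I expect a short scalar optimization (maximizing a linear combination subject to $BEST \le OPT$ and the sign constraints) to pin down the factor $11/4$.

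**Main obstacle.**

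The hardest step is the joint analysis: getting a clean, tight bound in regime (ii) that mixes $OPT_k$ and $BEST_k$ with exactly the right coefficients so that, after summing and applying $BEST \le OPT$, the aggregate never exceeds $\tfrac{3}{4}OPT$ beyond the baseline $OPT$. The geometry (how close $opt_k$ is to $s_\ell$) must be parametrized correctly so that the $BEST_k$ coefficient and the $OPT_k$ coefficient interpolate smoothly between the $\Delta_k \le \tfrac{1}{2}BEST_k$ endpoint and the $\Delta_k \le OPT_k$ endpoint, and the final constant should be verified to be attained (or nearly so) to match the lower bound promised in Section 4.4.
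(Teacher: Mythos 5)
Your overall skeleton (position-dependent bounds on each $\Delta_k$, a joint analysis of the two sets, and $BEST \le OPT$ from Lemma 3) matches the paper's, but the step you leave open --- the precise form of the bound in the dangerous regime where $opt_k$ lies strictly between $s_\ell$ and $s_{mid}$ --- is exactly where your plan fails. First, the unconditional bound $\Delta_k \le OPT_k$ you hope for there is false: put all of $S_1$ at a single point $p$ strictly between $s_\ell$ and $s_{mid}$, and let $S_2$ consist of two huge clusters that pin $s_\ell$ and $s_r$ far from $p$; then $opt_1 = p$, so $OPT_1 = 0$, while $\Delta_1 = \frac{1}{2}\min\{d(S_1,s_\ell), d(S_1,s_r)\} = \frac{1}{2}BEST_1 > 0$. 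Second, the same example forces $c \ge 1$ in any linear surrogate $2\Delta_1 \le OPT_1 + c\cdot BEST_1$, and with $c \ge 1$ your combination step caps out at ratio $3$: when \emph{both} $opt_1$ and $opt_2$ fall in the dangerous regime --- a case your joint analysis silently excludes, since you apply the regime-(i) bound $2\Delta_2 \le BEST_2$ to the second set --- the best you can conclude is $2\Delta \le OPT + 1\cdot BEST \le 2\, OPT$, i.e. $COST \le 3\cdot OPT$, which is just Theorem 2 again, not $2.75$.

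The missing idea in the paper is that the bound in this regime is a genuine \emph{dichotomy}, not a linear mix: for each $k$, either $\Delta_k \le \frac{3}{4} BEST_k$ or $\Delta_k \le \frac{1}{2} OPT_k$ (a maximum of two bounds is much stronger than their sum, and the example above shows no single linear inequality can be tight enough). Proving it requires machinery you do not sketch: an auxiliary point $s_c$ with $dist(s_c,s_{mid}) = c\cdot dist(s_\ell,s_c)$ splitting the regime in two (between $s_c$ and $s_{mid}$ a pure $BEST_k$ bound of $\frac{2c+1}{2c+2}$ holds), and, in the sub-case nearest $s_\ell$, a threshold test comparing $d(X_2,s_r)+d(X_3,s_\ell)$ against $\alpha\, d(X_1,s_\ell)$, which yields the $BEST_k$-branch when it passes and the lower bound $OPT_1 \ge (2c+1)(1-\alpha)\, d(X_1,s_\ell)$ when it fails; choosing $c=1$, $\alpha = \frac{1}{3}$ gives the constants $\frac{3}{4}$ and $\frac{1}{2}$. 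The joint step then needs one more trick absent from your plan: order $BEST_1 \ge BEST_2$, and in the mixed case $\Delta_1 \le \frac{3}{4}BEST_1$, $\Delta_2 \le \frac{1}{2}OPT_2$, deliberately \emph{discard} the $OPT_2$-bound in favour of the weaker $\Delta_2 \le BEST_2$ so that everything is in terms of $BEST$ and the ordering gives $\Delta \le \frac{7}{8}BEST \le \frac{7}{8}OPT$; all four cases then yield $\Delta \le \frac{7}{8}OPT$, hence $COST = OPT + 2\Delta \le 2.75\, OPT$. A minor further point: your closing expectation that $2.75$ is attained is not what the paper claims --- its lower bound for Mechanism 1 is $1+\sqrt{2}$, which it conjectures to be the true ratio.
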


\begin{proof}

We try to bound $\Delta_1$ first. By symmetry, we assume that $opt_1$ is on the left side of $s_{mid}$. Let $s_c$ be the point satisfying $dist(s_c,s_{mid}) = c \cdot dist(s_{\ell},s_c)$. Here, $c$ is a constant parameter to be determined later.

We discuss in the following three cases, according to the position of $opt_1$.

\textbf{Case 1.} $opt_1$ is on the left side of $s_{\ell}$.
\begin{figure}[h]
    \centering
    \includegraphics[width=0.90\textwidth]{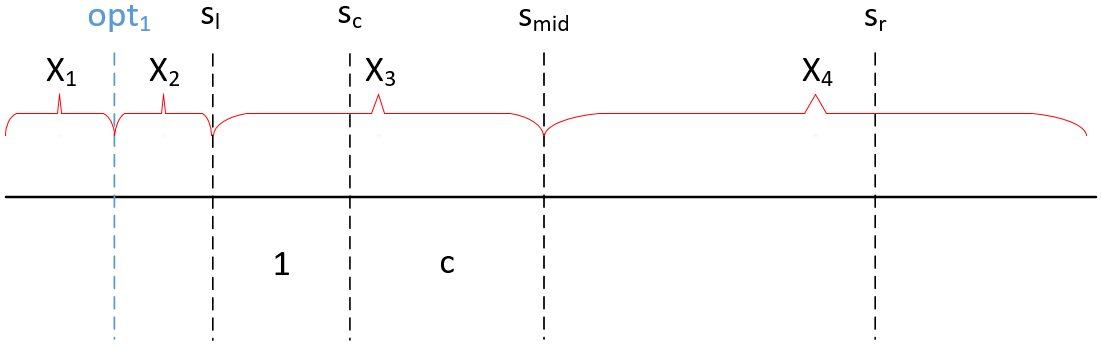}
    \caption{Case 1. $opt_1$ is on the left side of $s_{\ell}$}
    \label{Case 1}
\end{figure}

In this case, we divide $S_1$ into four parts: $X_1$, $X_2$, $X_3$ and $X_4$. (See Figure 2.)
\begin{enumerate}
	\item $X_1$ contains the points on the left of $opt_1$.
	\item $X_2$ contains the points between $opt_1$ and $s_{\ell}$.
	\item $X_3$ contains the points between $s_{\ell}$ and $s_{mid}$.
	\item $X_4$ contains the points on the right of $s_{mid}$.
\end{enumerate}\par

We allocate points on the boundary in such a way that $|X_1| = |X_2| + |X_3| + |X_4|$ (This can be achieved since $opt_1$ divides $S_1$ into two parts, left part and right part, with the same size).

Using the similar analysis as the proof of Theorem 2 and noticing that $opt_1$ is on the left side of $s_{\ell}$, we find that $\Delta_1$ is equal to the distance between $s_{\ell}$ and points in $S_1$ which lie between $opt_1$ and $s_{\ell}$, i.e. $d(X_2, s_{\ell})$. Hence, we have
\begin{align*}
	\Delta_1 	= d(X_2, s_{\ell})
				\le |X_2| \cdot dist(opt_1,s_{\ell})
				\le |X_1| \cdot dist(opt_1,s_{\ell})
				\le d(X_1, s_{\ell}).
\end{align*}

In this case,
\begin{align*}
	BEST_1 = d(X_1, s_{\ell}) + d(X_2, s_{\ell}) + d(X_3, s_{\ell}) + d(X_4, s_r) 
		   \ge d(X_1, s_{\ell}) + d(X_2, s_{\ell})
		   \ge 2 \cdot \Delta_1.
\end{align*}

This means that we have $\Delta_1 \le 1/2 \cdot BEST_1$.\\

\textbf{Case 2.} $opt_1$ is between $s_c$ and $s_{mid}$.
\begin{figure}[h]
    \centering
    \includegraphics[width=0.90\textwidth]{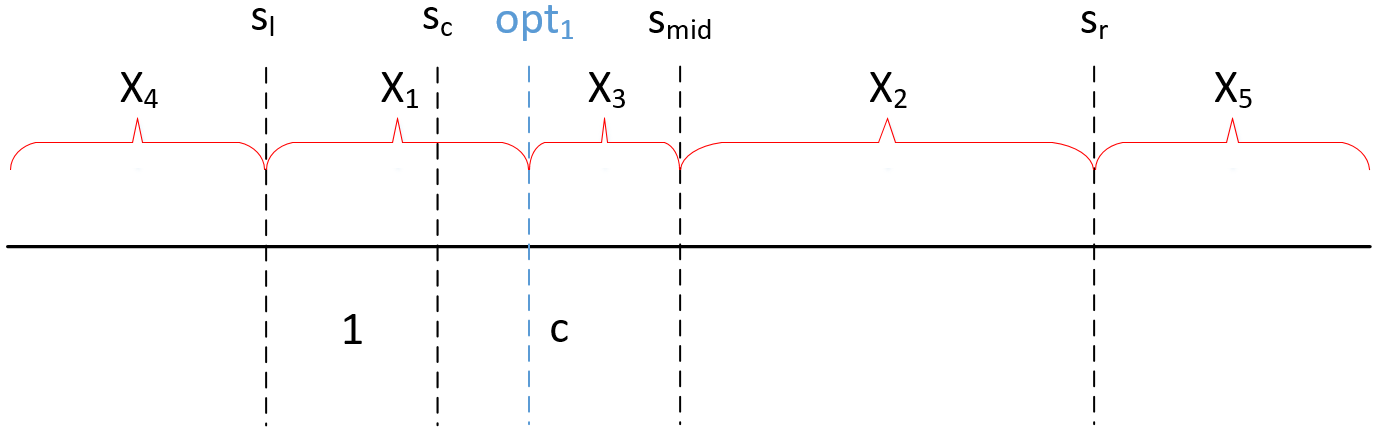}
    \caption{Case 2. $opt_1$ is between $s_c$ and $s_{mid}$}
    \label{Case 2}
\end{figure}

In this case, we divide $S_1$ into five parts: $X_1$, $X_2$, $X_3$, $X_4$ and $X_5$. (See Figure 3.)
\begin{enumerate}
	\item $X_1$ contains the points between $s_{\ell}$ and $opt_1$.
	\item $X_2$ contains the points between $s_{mid}$ and $s_r$.
	\item $X_3$ contains the points between $opt_1$ and $s_{mid}$.
	\item $X_4$ contains the points on the left of $s_{\ell}$.
	\item $X_5$ contains the points on the right of $s_r$.
\end{enumerate}

Similarly, we can assume $|X_1| + |X_4| = |X_2| + |X_3| + |X_5|$.

By the definition of $\Delta_1$, using a similar proof as Theorem 2, we have 
\begin{equation*}
	\Delta_1 = \min \{d(X_1,s_{\ell}), d(X_3, s_r) + d(X_2, s_r)\}.
\end{equation*}

In this case,
\begin{align*}
	BEST_1 &= d(X_1,s_{\ell}) + d(X_2, s_r) + d(X_3, s_{\ell}) + d(X_4, s_{\ell}) + d(X_5, s_r)\\
		   &\ge d(X_1,s_{\ell}) + d(X_2, s_r) + d(X_3, s_{\ell}).
\end{align*}

Since $dist(s_c,s_{mid}) = c \cdot dist(s_{\ell},s_c)$ and $dist(s_{\ell}, s_{mid}) = dist(s_{mid}, s_r)$, we have $dist(s_c, s_r) = (2c+1) \cdot dist(s_{\ell}, s_c)$. Given that $opt_1$ is on the right side of $s_c$ in this case, $dist(p, s_r) \le (2c+1) \cdot dist(p, s_{\ell})$ holds for every $p \in X_3$. Thus, we have $d(X_3, s_r) \le (2c+1) \cdot d(X_3, s_{\ell})$. Hence,
\begin{align*}
	d(X_3, s_r) + d(X_2, s_r) \le (2c+1) \cdot d(X_3, s_{\ell}) + d(X_2, s_r)
							  \le (2c+1) \cdot (d(X_3, s_{\ell}) + d(X_2, s_r)).
\end{align*}

If $d(X_1,s_{\ell}) \le (2c+1)/(2c+2) \cdot BEST_1$, then we have
\begin{equation*}
	\Delta_1 \le d(X_1,s_{\ell}) \le \frac{2c+1}{2c+2} BEST_1.
\end{equation*}

Otherwise, $d(X_1,s_{\ell}) > (2c+1)/(2c+2) \cdot BEST_1$. Then, we have
\begin{align*}
	\Delta_1 &\le d(X_3, s_r) + d(X_2, s_r)\\
			 &\le (2c+1) \cdot (d(X_3, s_{\ell}) + d(X_2, s_r))\\
			 &\le (2c+1) \cdot (BEST_1 - d(X_1,s_{\ell}))\\
			 &< \frac{2c+1}{2c+2} BEST_1.
\end{align*}

Therefore, in this case, we have $\Delta_1 \le (2c+1)/(2c+2) \cdot  BEST_1$.\\

\textbf{Case 3.} $opt_1$ is between $s_{\ell}$ and $s_c$. 
\begin{figure}[h]
    \centering
    \includegraphics[width=0.90\textwidth]{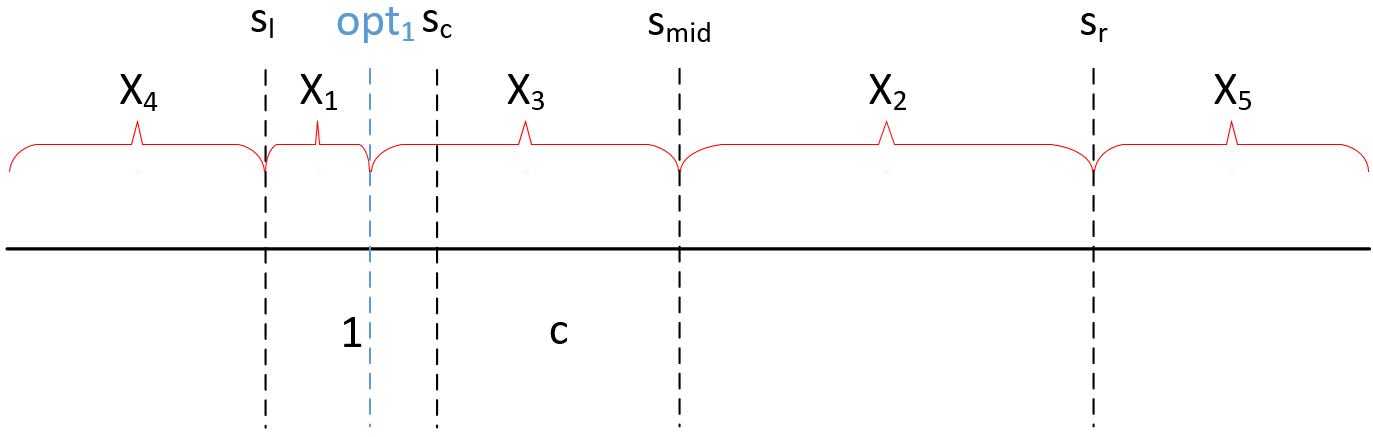}
    \caption{Case 3. $opt_1$ is between $s_{\ell}$ and $s_c$.}
    \label{Case 3}
\end{figure}

In this case, we divide $S_1$ into the same five parts defined in Case 2. (See Figure 4.)

Similar to Case 2, we have
\begin{align*}
	&\Delta_1 = \min \{d(X_1,s_{\ell}), d(X_3, s_r) + d(X_2, s_r)\},\\
	&BEST_1 = d(X_1,s_{\ell}) + d(X_2, s_r) + d(X_3, s_{\ell}) + d(X_4, s_{\ell}) + d(X_5, s_r).
\end{align*}

Let $\alpha$ be a constant parameter in $(0,1)$. We will determine the value of $\alpha$ later.

If $d(X_2, s_r) + d(X_3, s_{\ell}) \ge \alpha \cdot d(X_1,s_{\ell})$, we have
\begin{align*}
	BEST_1 \ge d(X_1,s_{\ell}) + d(X_2, s_r) + d(X_3, s_{\ell})
		   \ge (1+ \alpha) \cdot d(X_1, s_{\ell})
		   \ge (1+ \alpha) \cdot \Delta_1.
\end{align*}

This means $\Delta_1 \le 1/(1+\alpha) \cdot BEST_1$.

Otherwise, we have
\begin{align*}
	\alpha \cdot d(X_1, s_{\ell}) &> d(X_2, s_r) + d(X_3, s_{\ell}) \\
		   					 &\ge d(X_2, s_r) + |X_3| \cdot dist(opt_1, s_{\ell})\\
		 				     &\ge d(X_2, s_r) + \frac{|X_3|}{|X_1|} \cdot d(X_1, s_{\ell}).
\end{align*}

Now we use $OPT_1$ to bound $\Delta_1$. By definition, we have
\begin{align*}
	OPT_1 = d(S_1, opt_1) \ge d(X_2, opt_1) + d(X_5, opt_1).
\end{align*}

Note that $dist(v, opt_1) = dist(opt_1, s_r) - dist(v, s_r)$ holds for every $v \in X_2$ and $dist(v, opt_1) \ge dist(opt_1, s_r)$ holds for every $v \in X_5$. Hence, we have
\begin{align*}		  
	OPT_1 &\ge |X_2| \cdot dist(opt_1, s_r) - d(X_2, s_r) + |X_5| \cdot dist(opt_1, s_r)\\
		  &= (|X_2|+|X_5|) \cdot dist(opt_1, s_r) - d(X_2, s_r)\\
		  &\ge (|X_2|+|X_5|) \cdot \frac{2c+1}{2c+2} \cdot dist(s_{\ell}, s_r) - d(X_2, s_r).
\end{align*}

Note that $d(X_1, s_{\ell}) \le |X_1| \cdot dist(s_{\ell}, opt_1) \le |X_1| \cdot dist(s_{\ell},s_r)/(2c+2)$. Hence,
\begin{flalign*}	
	OPT_1 &\ge (2c+1) \cdot \frac{|X_2|+|X_5|}{|X_1|} \cdot d(X_1, s_{\ell}) - d(X_2, s_r)\\
		  &\ge (2c+1) \cdot \frac{|X_1|-|X_3|}{|X_1|} \cdot d(X_1, s_{\ell}) - d(X_2, s_r) \\
		  &= (2c+1) \cdot d(X_1, s_{\ell}) - ((2c+1) \cdot \frac{|X_3|}{|X_1|} \cdot d(X_1, s_{\ell})+ d(X_2, s_r))\\
		  &\ge (2c+1) \cdot d(X_1, s_{\ell}) - (2c+1)(\frac{|X_3|}{|X_1|} \cdot d(X_1, s_{\ell})+ d(X_2, s_r))\\
		  &\ge (2c+1) \cdot d(X_1, s_{\ell}) - (2c+1) \cdot \alpha \cdot d(X_1, s_{\ell})\\
		  &= (2c+1) \cdot (1 - \alpha) \cdot d(X_1, s_{\ell})\\
		  &\ge (2c+1) \cdot (1 - \alpha) \cdot \Delta_1.
\end{flalign*}

This means that we have
\begin{equation*}
	\Delta_1 \le \frac{1}{(2c+1) \cdot (1-\alpha)} OPT_1.
\end{equation*}

Now we summarize those three cases. Let $c=1$ and $\alpha = 1/(2c+1) = 1/3$. At least one of the following will happen:
\begin{align*}
	&\Delta_1 \le \max \{\frac{1}{2}, \frac{2c+1}{2c+2}, \frac{1}{1 + \alpha}\} \cdot BEST_1 = \frac{3}{4}BEST_1,~ {\rm or}\\
	&\Delta_1 \le \frac{1}{(2c+1) \cdot (1-\alpha)} OPT_1 = \frac{1}{2} OPT_1.
\end{align*}

It is easy to see that we can get the similar bounds for $\Delta_2$:
\begin{align*}
	\Delta_2 \le \frac{3}{4} BEST_2 ~ {\rm or} ~ \Delta_2 \le \frac{1}{2} OPT_2.
\end{align*}

Due to that there is no relationship between $OPT_1$ and $BEST_1$, we consider $\Delta_1$ and $\Delta_2$ together. Without loss of generality, we assume $BEST_1 \ge BEST_2$. We divide the discussion into 4 cases:

\textbf{Case 1:} $\Delta_1 \le 3/4 \cdot BEST_1$, $\Delta_2 \le 3/4 \cdot BEST_2$. In this case, we have
\begin{equation*}
\Delta 	= \Delta_1 + \Delta_2
		\le \frac{3}{4} (BEST_1 + BEST_2)
		= \frac{3}{4} BEST
		\le \frac{3}{4} OPT.
\end{equation*}

\textbf{Case 2:} $\Delta_1 \le 1/2 \cdot OPT_1$, $\Delta_2 \le 1/2 \cdot OPT_2$. In this case, we have
\begin{equation*}
\Delta 	= \Delta_1 + \Delta_2
		\le \frac{1}{2} (OPT_1 + OPT_2)
		= \frac{1}{2}OPT.
\end{equation*}

\textbf{Case 3:} $\Delta_1 \le 1/2 \cdot OPT_1$, $\Delta_2 \le 3/4 \cdot BEST_2$. In this case, we have
\begin{align*}
\Delta &= \Delta_1 + \Delta_2\\
	   &\le \frac{1}{2} OPT_1 + \frac{3}{4}BEST_2\\
	   &\le \frac{1}{2} OPT + \frac{3}{8}BEST\\
	   &\le \frac{7}{8}OPT.
\end{align*}

\textbf{Case 4:} $\Delta_1 \le 3/4 \cdot BEST_1$, $\Delta_2 \le 1/2 \cdot OPT_2$. In this case, we have
\begin{align*}
\Delta &= \Delta_1 + \Delta_2\\
	   &\le \frac{3}{4} BEST_1 + BEST_2\\
	   &\le \frac{3}{4} BEST_1 + \frac{1}{8}BEST_1 + \frac{7}{8}BEST_2\\
	   &= \frac{7}{8} BEST\\
	   &\le \frac{7}{8} OPT.
\end{align*}

It is easy to see that $\Delta \le 7/8 \cdot OPT$ holds for all these four cases. Therefore, we have $COST - OPT = 2 \cdot \Delta \le 2 \cdot 7/8 \cdot OPT = 7/4 \cdot OPT$, i.e. $COST \le 2.75 \cdot OPT$. 

\end{proof}
\subsection{A lower bound of the approximation ratio of Mechanism 1}

We provide a lower bound for the approximation ratio of Mechanism 1.

Let $N$ be a positive integer. Let $N$ agents be located at $x=0$ with preference $\{F_1\}$, $\lfloor (1+\sqrt{2})N \rfloor$ agents be located at $x=1$ with preference $\{F_1\}$ and almost infinite (far greater than $N$) agents be located at $x=\sqrt{2}$ with preference $\{F_2\}$. It is easy to see that $(s_{\ell}, s_r)$ will be $(0, \sqrt{2})$. Note that the minimum social cost is $N$ while the social cost calcuted by Mechanism 1 is $\min(\lfloor (1+\sqrt{2})N \rfloor, \sqrt{2} N + \lfloor (1+\sqrt{2})N \rfloor (\sqrt{2}-1))$. When $N$ increases to infinity, the ratio of social cost by Mechanism 1  to the minimum social cost approaches $1+\sqrt{2}$. Therefore, $1+\sqrt{2}$ is a lower bound of the approximation ratio of Mechanism 1.

We conjecture that the exact approximation ratio of Mechanism 1 is $1+\sqrt{2}$.

\section{Generalization}

We have tried to generalize our mechanism to $k$ facilities $(k \ge 3)$. Our generalized mechanism first ignores agents' preferences and assumes all facilities are acceptable for each agent. Then it only uses the location profile to calculate the optimal locations of all $k$ facilities, which we denote as $s_1, s_2, ..., s_k$. Finally, the mechanism enumerates $F_i$'s location being at $s_j$, $k^k$ cases in total, and chooses the one that minimizes the social cost.

However, unfortunately, this generalized mechanism is not strategyproof when there are more than two facilities. We give an example to show that this mechanism is not able to prevent agents from misreporting their preferences when there are three facilities.

\begin{Exmp}
	Assume that $l_1$ and $l_2$ are two positive numbers satisfying $2l_2 < l_1 < 3l_2$. Let
	\begin{itemize}
	\item two agents whose preferences are $\{F_2\}$ be located at $x=0$,
	\item one agent whose preference is $\{F_2\}$ be located at $x=l_1 - l_2$,
	\item one agent whose preference is $\{F_2\}$ be located at $x=l_1$,
	\item one agent whose preference is $\{F_2, F_3\}$ be located at $x=l_1 + l_2$,
	\item almost infinite agents whose preferences are $\{F_1\}$ be located at $x=0$,
	\item almost infinite agents whose preferences are $\{F_3\}$ be located at $x=2l_1 + l_2$.
	\end{itemize}
\end{Exmp}

\begin{figure}[h]
    \centering
    \includegraphics[width=0.90\textwidth]{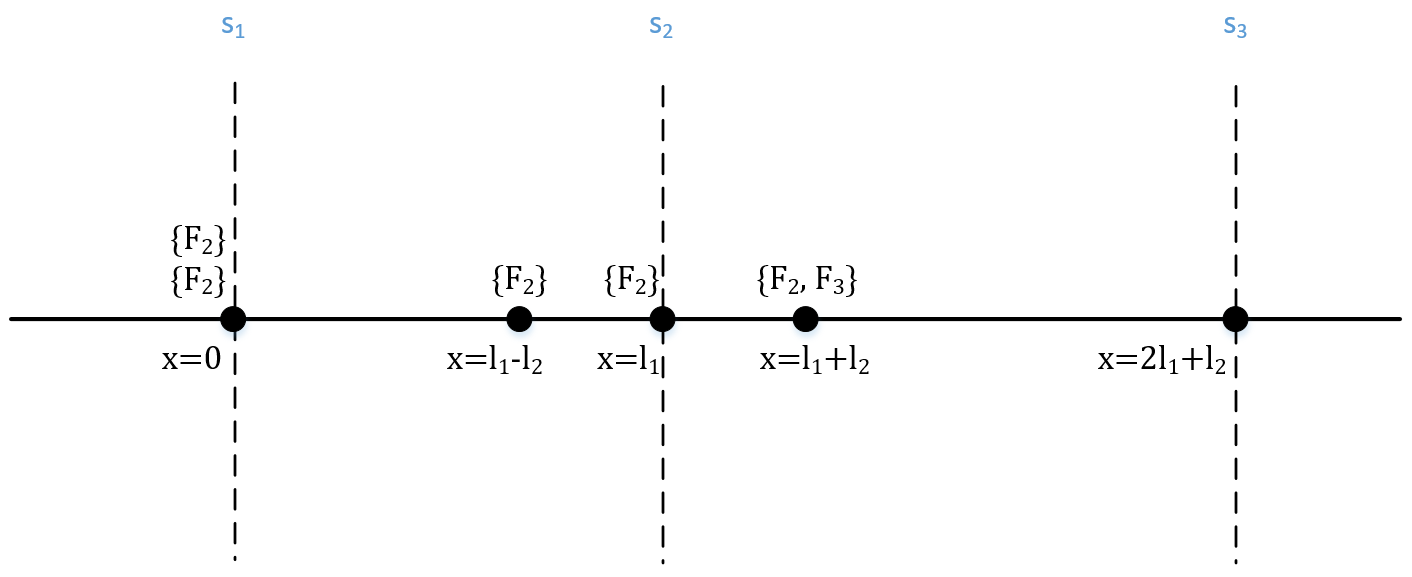}
    \caption{An example which shows the generalized mechanism is not strategyproof}
    \label{counter-example}
\end{figure}

It is easy to see that if we use the generalized mechanism on this input, $(s_1, s_2, s_3)$ will be $(0, l_1, 2 l_1 + l_2)$. (See Figure 5.) The almost infinite agents at $s_1$ with preferences $\{F_1\}$ force $F_1$ to be built at $s_1$. (Otherwise, the social cost will be almost infinite.) For the same reason, $F_3$ will be built at $s_3$.

If all agents report their true preferences, then $F_2$ will be built at $s_1$. However, if the agent at $l_1 + l_2$ (with preference $\{F_2, F_3\}$) misreports his preference as $\{F_2\}$, then $F_2$ will be built at $s_2$ and the cost of this agent will decrease from $l_1$ to $l_2$. Therefore, this generalized mechanism fails to prevent this agent from misreporting his preference.

In this example, we find that the social cost will increase if $F_2$ is still built at $s_1$ after the agent at $l_1 + l_2$ misreports his preference. However, this situation will not happen when there are only two facilities because the mechanism chooses the facilities' locations only between two locations $s_{\ell}$ and $s_r$. That is the reason why the mechanism is strategyproof when there are only two facilities but not strategyproof for three facilities.

Consider the facility location game with $k$ facilities, where $k>3$. For every $i \in \{4, 5, ..., k\}$, we add almost infinite agents whose locations are $x = (i-1) \cdot l_1 + l_2$ and preferences are $\{F_i\}$ to the example above. It is easy to see that the new example shows that the generalized mechanism is not strategyproof when there are $k$ facilities. Hence, this generalized mechanism is not strategyproof for the facility location game with more than two facilities.
\section{Conclusion}

To summarize, we have designed a strategyproof mechanism whose approximation ratio is no more than 2.75 and no less than $1+\sqrt{2}$. It is an open question whether the exact approximation ratio of this mechanism is $1+\sqrt{2}$. Furthermore, the bounds for approximation ratios of strategyproof mechanisms for two facilities do not match yet.

Besides, it is also an interesting open question to design a strategyproof mechanism with constant approximation ratio for the facility location game with three or more facilities.

\bibliographystyle{splncs04}
\bibliography{bib/Bibliography-File}

\end{document}